\newtheorem{thm}{Theorem}[section]
\newtheorem{cor}[thm]{Corollary}
\theoremstyle{definition}
\newtheorem{defn}[thm]{Definition}
\theoremstyle{remark}
\newtheorem{rem}[thm]{Remark}
\newtheorem{algo}[thm]{Algorithm}
\newtheorem{con}[thm]{Conjecture}
\newtheorem{ex}[thm]{Example}
\numberwithin{equation}{section}
\def\lead{\leaders\hbox to 1.5ex{\hss${.}$\hss}\hfill}
\def\arr{\hbox to 40pt{\rightarrowfill}}
\def\larr{\hbox to 40pt{\leftarrowfill}}
\long\def\alert#1{\parindent2em\smallskip\hbox to\hsize%
{\hskip\parindent\vrule%
\vbox{\advance\hsize-2\parindent\hrule\smallskip\parindent.4\parindent%
\narrower\noindent#1\smallskip\hrule}\vrule\hfill}\smallskip\parindent0pt}
\begin{document}

\title[Realization of Lie algebras of high dimension...]{Realization of Lie algebras of high dimension via pseudo-bosonic operators}

\author[F. Bagarello]{Fabio Bagarello}
\address{Dipartimento di  Ingegneria \endgraf
Universit\'a  di Palermo\endgraf
Viale delle Scienze, I-90128\endgraf
Palermo, Italy\endgraf
and \endgraf
Istituto Nazionale di Fisica Nucleare \endgraf
Sezione di Napoli, via Cinthia, Montesantangelo\endgraf
Napoli, Italy \endgraf 
URL: www1.unipa.it/fabio.bagarello \endgraf}
\email{fabio.bagarello@unipa.it}

\author[F.G. Russo]{Francesco G. Russo}
\address{Department of Mathematics and Applied Mathematics\endgraf
 University of Cape Town \endgraf
 Private Bag X1, Rondebosch 7701\endgraf 
 Cape Town, South Africa\endgraf}
\email{francescog.russo@yahoo.com}

\subjclass{17Bxx; 37J15; 70G65; 81R30}

\date{\today}


\begin{abstract}
 The present paper is the third contribution of a series of works, where we investigate pseudo--bosonic operators and their connections with finite dimensional Lie algebras.  We show that  all finite dimensional nilpotent Lie algebras (over the complex field) can be realized by central extensions of Lie algebras of pseudo-bosonic operators.  
This result is interesting, because it provides new examples of dynamical systems for nilpotent Lie algebras of any dimension. One could ask whether these operators are intrinsic with the notion of nilpotence or not, but this is false. In fact we exibit both a simple Lie algebra and a solvable nonnilpotent Lie algebra, which can  be realized in terms of pseudo-bosonic operators.
\end{abstract}

\subjclass[2010]{Primary: 47L60, 17B30; Secondary: 17B60, 46K10}
\keywords{Pseudo--bosonic operators, Hilbert space, Schur multiplier, nilpotent Lie algebras, homology }
\date{\today}

\maketitle

\section{Introduction}

Heisenberg algebras are finite dimensional Lie algebras, which are very common in  quantum mechanics and theoretical physics. Their  generalization has interested the researches  of several authors, because they describe very well the symmetries of many dynamical systems, and so Heisenberg algebras represent a first important step of mathematical modelisation of the natural phenomena. 

It is  possible to generalize Heisenberg algebras in terms of the structure, since they appear as appropriate extensions of two abelian Lie algebras under prescribed conditions. These  are reflected by commuting relations on the generators and can be formalized via the notion of nilpotence.

Therefore it is not surprising that a large part of the   classification of finite dimensional nilpotent Lie algebras   originate from motivations both of mathematics and physics. The references \cite{anch, beckekolman, degraaf, gong, morozov, mubarak, patera, turk, umlauf} are fundamental for the classification of Lie algebras of  dimension  $\le 6$. Seeley \cite{seeley}  shows the presence of infinitely many nonisomorphic finite dimensional nilpotent Lie algebras when their dimension is $ \ge 7$, improving an earlier result  \cite{sund1}, where the same problem is noted for dimension $9$. This fact represents an obstacle for the classification via dimension, and a classical approach is to look for large enough nilpotent ideals (see \cite{snobl&wint}), when one wants to go ahead to classify finite dimensional nilpotent Lie algebras of higher dimensions. The reference \cite{snobl} describes this approach.

An interpretation has been given in \cite{bag6, bag7}, introducing a framework of functional analysis and topology for pseudo-bosonic operators. This theory was developed in connection with $PT$ and similar extensions of standard quantum mechanics, where self adjoint Hamiltonians are replaced by other operators satisfying suitable symmetry conditions. The reader can refer to \cite{bagbook, bag1, bag5, bag6, bag7,ben,mosta} for the details. 

Here we connect the theory of pseudo-bosonic operators and that of finite dimensional complex Lie algebras, showing that those which are nilpotent can always be realized by  systems of pseudo-bosons. Actually we give some examples, where the hypothesis of nilpotence can be omitted.

The organization of the present paper is the following. Section 2 describes a short survey on a method of construction of Skjelbred and Sund \cite{sund1}, which turns out to be a powerful tool for the proof of our main result. Additional feedback on the theory of the pseudo-bosons is recalled at the beginning of Section 3, before to prove our main theorems, which are Theorems \ref{main1}, \ref{main2}  and \ref{main3}. A final conjecture is open, since we haven't  found yet a finite dimensional complex nonnilpotent nonsolvable Lie algebra, which cannot be realized by pseudo-bosonic operators.

\section{A brief survey on the method of Skjelbred and Sund}

A complete classification of nilpotent Lie algebras has interested several authors since long time and the presence of nonisomorphic finite dimensional nilpotent Lie algebras forces the use of abstract methods along with those of computational nature. Umlauf \cite{umlauf} gave a first classification of nilpotent algebras up to dimension six, but it is well known that his results were affected by mistakes, because several algebras presented isomorphisms among them and this gave problems in the  identification.  After Umlauf's work  in 1891, there have been made several attempts to develop computational methods and alternative arguments of classification via dimension. Morozov \cite{morozov} observed
that there is a lower bound for the dimension of a maximal abelian ideal of a nilpotent Lie algebra. This suggests an inductive method, based on the idea to construct a nilpotent Lie algebra via extensions of others of low dimension.

Skjelbred and Sund \cite{sund1} have reduced the classification of nilpotent Lie algebras in a given dimension to the study of an appropriate space of representations  via the notion of Schur multiplier, which has been recalled below more formally.  Skjelbred and Sund's method is very interesting from a theoretical point of view, but it is not easy to use in practice, because the description of Schur multipliers has its own difficulties of computational nature. Neverthless these difficulties can be solved with the use of recent packages of computer algebra like \cite{gap}.

We are going to give more details on the method in \cite{sund1}, reporting also some observations which have been  made by Beck and Kolman \cite{beckekolman}. The classical notion of \textit{Chevalley-Eilenberg complex} for a Lie algebra can be found in \cite[Definition 7.7.1]{weibel} and that of Schur multiplier   in \cite[Chapter 7]{weibel}, but the main ideas are briefly sketched in our context of study. 

Consider a finite dimensional complex  Lie algebra $\mathfrak{l}$  with center \[Z(\mathfrak{l})=\{ a \in \mathfrak{l}  \ | \ [a,b]=0  \ \forall b \in \mathfrak{l}\} \] and derived Lie subalgebra \[[\mathfrak{l}, \mathfrak{l}]= \langle [a,b] \ | \ a,b \in \mathfrak{l}\rangle. \]
From now, until the end of the present paper, we will always refer to Lie algebras whose ground field is the field of complex numbers.

 The induced Lie algebra structure on  $Z(\mathfrak{l})$ reduces to that of a finite dimensional vector space with complex coefficients.

\medskip
\medskip
\medskip

\begin{defn}[See \cite{weibel}]\label{extensions}
We say that  $\mathfrak{l}$ is an \textit{extension} of an abelian Lie algebra $\mathfrak{a}$ (called \textit{abelian kernel}) by another  Lie algebra $\mathfrak{b}$, if there is a short exact sequence
$0 \ \longrightarrow \  \mathfrak{a} {\overset{\mu}{\longrightarrow}} \ \mathfrak{l} \ {\overset{\varepsilon}{\longrightarrow}} \mathfrak{b} \ \longrightarrow \ 0$
such that  $\mathfrak{a}$ is an abelian ideal of $\mathfrak{l}$ and $\mathfrak{l}/\mathfrak{a} \simeq \mathfrak{b}$ is the Lie algebra quotient. Here $\mu$ is  monomorphism,  $\varepsilon$  epimorphism, $\ker \varepsilon = \mathrm{Im} \ \mu $. If in addition $\mathfrak{a} \subseteq Z(\mathfrak{l})$,  then $\mathfrak{l}$ is called \textit{central extension} of $\mathfrak{a}$ by $\mathfrak{b}$.
\end{defn}

\medskip
\medskip
\medskip

A special case of extension, called  \textit{split extension}, is offered by the {\em semidirect sum} of two of Lie subalgebras $\mathfrak{a}$ and $\mathfrak{b}$ of a Lie algebra $\mathfrak{l}$. This happens when $\mathfrak{a}$  is an ideal of $\mathfrak{l}$, $\mathfrak{l}= \mathfrak{a} + \mathfrak{b}$ and  $\mathfrak{a} \cap \mathfrak{b} = 0$.
Of course, Definition \ref{extensions} is satisfied and it is instructive  to  produce examples of semidirect sums, which are just extensions but not necessarily central extensions.

\medskip
\medskip
\medskip

\begin{defn}\label{uppercentralseries}The \textit{upper central series} of $\mathfrak{l}$ is the series
\[0=Z_0(\mathfrak{l}) \leq Z_1(\mathfrak{l})\leq Z_2(\mathfrak{l}) \leq \ldots \leq Z_i(\mathfrak{l}) \le Z_{i+1}(\mathfrak{l}) \leq \ldots, \]
where each $Z_i(\mathfrak{l})$ turns out to be an ideal of $\mathfrak{l}$ (called the $i$-th $center$ of $\mathfrak{l}$) defined by \[ \frac{Z_1(\mathfrak{l})}{Z_0(\mathfrak{l})}=Z(\mathfrak{l}),\quad \frac{Z_2(\mathfrak{l})}{Z_1(\mathfrak{l})}=Z\left(\frac{\mathfrak{l}}{Z_1(\mathfrak{l})}\right),   \ldots, \frac{Z_{i+1}(\mathfrak{l})}{Z_i(\mathfrak{l})}=Z\left(\frac{\mathfrak{l}}{Z_i(\mathfrak{l})}\right), \ldots \]
We say that $\mathfrak{l}$ is  \textit{nilpotent of class $c$} if  the upper central series of $\mathfrak{l}$ ends after   $c $ of steps. By duality, one can introduce the \textit{lower central series} of $\mathfrak{l}$,  defined in terms of commutators by
\[\gamma_1(\mathfrak{l})=\mathfrak{l} \geq \gamma_2(\mathfrak{l})=[\mathfrak{l},\mathfrak{l}]  \geq  \gamma_3(\mathfrak{l})=[[\mathfrak{l},\mathfrak{l}],\mathfrak{l}]  \geq \ldots \geq \gamma_i(\mathfrak{l}) \geq \gamma_{i+1}(\mathfrak{l}) \geq \ldots, \]
where  $\gamma_i(\mathfrak{l})$ turns out to be an ideal of $\mathfrak{l}$ (called $i$th $derived$ of $\mathfrak{l}$) 
 and $\mathfrak{l}$ is  \textit{nilpotent of class $c$} if $\gamma_{c+1}(\mathfrak{l})=0$.
 
The \textit{derived series} of  $\mathfrak{l}$ is the series
\[\mathfrak{l} \supseteq [\mathfrak{l},\mathfrak{l}]=\mathfrak{l}^{(1)} \supseteq  [\mathfrak{l}^{(1)},\mathfrak{l}^{(1)}]=\mathfrak{l}^{(2)} \supseteq \ldots \supseteq \mathfrak{l}^{(i+1)}=[\mathfrak{l}^{(i)},\mathfrak{l}^{(i)}] \supseteq \ldots \]
 and one can check that each quotient $\mathfrak{l}^{(i+1)}/\mathfrak{l}^{(i)}$ is abelian Lie algebra. We say that $\mathfrak{l}$ is  \textit{solvable of derived length $m$} if  the derived series reaches the zero after   $m $  steps. 
In case $m \le 2$, $\mathfrak{l}$ is  called \textit{metabelian}\footnote{Definition \ref{uppercentralseries} is explicitly reported, because the adjective ``metabelian'' might have a different meaning in Russian literature.  From Definition \ref{uppercentralseries},  ``metabelian'' is `` of derived length $\le 2$'', so a centerless Lie algebra of derived length $2$ is metabelian and nonnilpotent, while ``metabelian'' might mean ``nilpotent of class $\le 2$'' according to some papers which we encountered in literature. This is the case of \cite[Definition 1.1]{gal}.}.  
 \end{defn}

\medskip
\medskip
\medskip

There exist finite dimensional solvable nonnilpotent Lie algebras in \cite{patera}.

\medskip
\medskip
\medskip

\begin{ex}\label{nonnilpotent}For all $\alpha \in \mathbb{C} \setminus \{0\}$, the Lie algebra
\[L_{3,5}= \langle x_1, x_2, x_3 \ | \ [x_1,x_2]=x_3, [x_1,x_3]=\alpha x_2, [x_2,x_1]=- x_3, [x_3,x_1]=\alpha x_2\rangle\]
 is metabelian. Note that $[x_1,x_2], [x_2,x_1], [x_1,x_3], [x_3,x_1] \in \langle x_2, x_3 \rangle = \langle x_2 \rangle \oplus \langle x_3 \rangle \simeq \mathfrak{i} \oplus \mathfrak{i}$ is an abelian Lie algebra of dimension two. Moreover $[\mathfrak{l},\mathfrak{l}]= \langle x_2,x_3\rangle$ is clearly an ideal of $\mathfrak{l}$,  $\langle x_2,x_3\rangle \cap \langle x_1\rangle=0$ and  $ \mathfrak{l}/[\mathfrak{l},\mathfrak{l}] = \langle x_1\rangle$ is one dimensional. The fact that it is metabelian and nonabelian is shown by the defining relations. In addition, the commuting relations show that there is no nontrivial center, i.e.: $Z(\mathfrak{l})=0$. Note also that $L_{3,5}$  is split extension of $\langle x_2,x_3\rangle$ by $\langle x_1 \rangle$; actually it is their semidirect sum. This shows an example of metabelian nonnilpotent Lie algebra. Further examples of solvable nonnilpotent Lie algebras of dimension $3$  can be found in  \cite{patera}.
\end{ex}

\medskip
\medskip
\medskip

Two observations are appropriate here. The first is that \cite{nd} (and more generally \cite{snobl}) provide examples of finite dimensional solvable Lie algebras, whose structure  generalizes that of  Example \ref{nonnilpotent}. This is done via the notion of \textit{nilradical} $N(\mathfrak{l})$ of $\mathfrak{l}$, that is, $N(\mathfrak{l})$ is the largest nilpotent ideal in $\mathfrak{l}$. For instance, $N(\mathfrak{l})=\langle x_2, x_3 \rangle$ in Example \ref{nonnilpotent} and one can easily see that, despite $\mathfrak{l}$ is nonnilpotent, it has a large nilpotent ideal $N(\mathfrak{l})$, which may be used to get information on $\mathfrak{l}$. This idea helps to classify nonnilpotent finite dimensional Lie algebras for high dimensions. A second important observation is in fact the physical meaning of such Lie algebras and this  has been investigated since long time, see \cite{snobl&wint}.
 
Of course, nilpotent Lie algebras are solvable and some important  examples are offered by the Heisenberg algebras which are reported below.

\medskip
\medskip
\medskip

\begin{defn}\label{Heisenberg} We say that  $\mathfrak{l}$ is  $Heisenberg$ provided that $[\mathfrak{l},\mathfrak{l}]=Z(\mathfrak{l})$ and $\mathrm{dim}([\mathfrak{l},\mathfrak{l}]) = 1$. Such algebras are odd dimensional with
basis $v_1, \ldots , v_{2m}, v$ and the only nonzero commutator between basis elements is $[v_{2i-1}, v_{2i}] = -
[v_{2i}, v_{2i-1}]= v$ for $i = 1,2, \ldots ,m$. They are  denoted by $\mathfrak{h}(m)$
and $\mathrm{dim} \ \mathfrak{h}(m)=2m + 1$.
\end{defn}

\medskip
\medskip
\medskip

At this point, we adapt some notions from \cite[Pages 642--644]{degraaf} to our context of study. Assume $\mathfrak{l}$ to be nilpotent  of finite dimension on $\mathbb{C}$, $V$ vector space on $\mathbb{C}$ and  recall that a skew--symmetric bilinear map on $\mathfrak{l}$ is a map $$\theta : (x , y) \in  \mathfrak{l} \times \mathfrak{l} \mapsto \theta(x , y)  \in V$$ satisfying for all $x,y,z \in \mathfrak{l}$ the  condition 
\begin{equation}\label{cycle}\theta([x,y],z) + \theta([z,x],y) + \theta([y,z],x)= 0.
\end{equation}
This turns out to be a well defined map, called  \textit{cocycle} (of length $2$) in $\mathfrak{l}$. The set  $Z^2(\mathfrak{l})$ of all cocycles on $\mathfrak{l}$ forms an abelian Lie algebra of finite dimension. Now  if we assign a linear map $\nu : u\in \mathfrak{l} \to \nu(u) \in V$ and define $$\beta :  (x, y) \in  \mathfrak{l} \times \mathfrak{l} \mapsto  \beta (x, y) = \nu([x, y]) \in V,$$ then we can check that  \eqref{cycle} is satisfied, so $\beta$ is a particular type of cocyle, called \textit{coboundary} (of length $2$) in $\mathfrak{l}$. The set $B^2(\mathfrak{l})$  of all coboundaries on $\mathfrak{l}$ forms a subalgebra of $Z^2(\mathfrak{l})$ and the \textit{Schur multiplier}  of $\mathfrak{l}$ is defined by
$$M(\mathfrak{l})= \frac{Z^2(\mathfrak{l})}{B^2(\mathfrak{l}) }.$$
According to \cite{weibel}, it describes the second cohomology Lie algebra with complex coefficients. Of course $M(\mathfrak{l})$ is an abelian Lie algebra.

Given $\theta \in Z^2(\mathfrak{l}) $, we may write its kernel and its radical  
\begin{equation}\label{radical}
 \ker \theta =\{(x,y)  \ |  \ \theta (x,y)=0 \}, \ \ \   \theta^+ =\{x\in \mathfrak{l}  \ |  \ \theta (x,y)=0,  \ \ \forall y \in \mathfrak{l} \},
\end{equation} 
respectively. We may also introduce a new set \begin{equation}\label{delicate}\mathfrak{l}_\theta = \mathfrak{l} \oplus V
\end{equation}
and endow it of the Lie algebra structure, defining for all $x, y  \in \mathfrak{l}$ and $u,v \in V$, that is, for all $(x+u, y+v) \in \mathfrak{l}_\theta \times \mathfrak{l}_\theta$ the Lie bracket 
\begin{equation}\label{liebracket}
[x+  u, \ y + v] = {[x,y]}_\mathfrak{l} + \theta(x,y)
\end{equation}   on $\mathfrak{l}_\theta,$ where ${[ \ , \ ]}_\mathfrak{l}$ denotes the Lie bracket operation in $\mathfrak{l}$. We are basically extending the Lie bracket in $\mathfrak{l}$ with an additive term in \eqref{liebracket}, depending only on $\theta$. One can easily check that this turns out to be a new Lie bracket operation, but this time in  $\mathfrak{l}_\theta.$ Of course, Definition \ref{extensions} applies to $\mathfrak{l}_\theta$ and in fact $\mathfrak{l}_\theta$ is a  new complex Lie algebra of finite dimension which appears as central extension of $\mathfrak{l}$.  It is also elementary to check that $\mathfrak{l}_\theta \simeq \mathfrak{l}_{\theta + \mu}$ for all $\mu \in B^2(\mathfrak{l})$, that is, the construction of the central extension is unique up to coboundaries. Therefore, one can assume $\theta \in M(\mathfrak{l})$ without loss of generality, in order to form central extensions.
The aforementioned process holds in particular when we replace the role of $\mathfrak{l}$ with $\mathfrak{l}/Z(\mathfrak{l})$ and that of $V$ with $Z(\mathfrak{l}) \neq 0$. In fact one can see  that any Lie algebra with a nontrivial centre can be obtained as a central extension of a Lie algebra of smaller dimension. So in particular, all nilpotent Lie algebras can be
constructed this way. Note also that when we construct nilpotent Lie algebras as $\mathfrak{l}_\theta$, we may restrict to $\theta$ such that $Z(\mathfrak{l}_\theta)=V$, avoiding to construct the same Lie algebra as central extension of different Lie algebras.
Details  can be found in \cite{degraaf}. We have all that we need, in order to recall the following result.

\medskip
\medskip

\medskip
\medskip

\begin{algo}[Skjelbred and Sund \cite{sund1}, 1978]\label{method}
Let $\mathfrak{l}$ be a finite dimensional nilpotent Lie algebra with center $Z(\mathfrak{l})= \langle u_1, u_2, \ldots, u_z \rangle$.
\begin{itemize}

\medskip
\medskip

\item[\textbf{Step 1.}] Compute $Z^2(\mathfrak{l}/Z(\mathfrak{l}))$, $B^2(\mathfrak{l}/Z(\mathfrak{l}))$ and so $M(\mathfrak{l}/Z(\mathfrak{l}))$.

\medskip
\medskip

\item[\textbf{Step 2.}] Note that two cocycles $\theta_1, \theta_2 \in Z^2(\mathfrak{l}/Z(\mathfrak{l}))$ induce the same central extension if they are equivalent modulo  $B^2(\mathfrak{l}/Z(\mathfrak{l}))$, so, it is possible to consider a unique  extension $\mathfrak{l}/Z(\mathfrak{l})_\theta$ with $\theta \in M(\mathfrak{l}/Z(\mathfrak{l}))$ up to isomorphisms.

\medskip
\medskip

\item[\textbf{Step 3.}] Give  $\theta \in M(\mathfrak{l}/Z(\mathfrak{l}))$ and write the linear combination $\theta(x+Z(\mathfrak{l}),y+Z(\mathfrak{l}))=\theta_1(x+Z(\mathfrak{l}),y+Z(\mathfrak{l}))u_1 + \ldots + \theta_z (x+Z(\mathfrak{l}),y+Z(\mathfrak{l}))u_z$
for linear independent cocycles $\theta_1, \theta_2, \ldots \theta_z \in M(\mathfrak{l}/Z(\mathfrak{l}))$ such that the condition $ \theta^+ \cap Z(\mathfrak{l}/Z(\mathfrak{l}))=0$ is satisfied.

\medskip
\medskip

\item[\textbf{Step 4.}] Note that an element $\varphi$ of the automorphism group $\mathrm{Aut}(\mathfrak{l}/Z(\mathfrak{l}))$ acts on a cocycle $\alpha \in Z^2(\mathfrak{l}/Z(\mathfrak{l}))$ via $\varphi(\alpha(x+Z(\mathfrak{l}),y+Z(\mathfrak{l})))= \alpha(\varphi(x+Z(\mathfrak{l}),\varphi(y+Z(\mathfrak{l}))))$. Consequently $\mathrm{Aut}(\mathfrak{l}/Z(\mathfrak{l}))$ acts on $M(\mathfrak{l}/Z(\mathfrak{l}))$  modulo $B^2(\mathfrak{l}/Z(\mathfrak{l}))$.

\medskip
\medskip

\item[\textbf{Step 5.}] Find the orbits of the action of $\mathrm{Aut}(\mathfrak{l}/Z(\mathfrak{l}))$  on the given  $ \theta \in M(\mathfrak{l}/Z(\mathfrak{l}))$ in Step 3 above.

\medskip
\medskip

\item[\textbf{Step 6.}] For each $\theta$, we get the Lie algebra $\mathfrak{l}/Z(\mathfrak{l})_\theta=\mathfrak{l}/Z(\mathfrak{l}) \oplus Z(\mathfrak{l})$.

\end{itemize}
\end{algo}

\medskip
\medskip

Because of Algorithm \ref{method}, it is possible to construct all nilpotent Lie algebras of finite dimension. Step 1 does not present difficulties for small dimensions, but higher dimensions  require the help of appropriate programs like  \cite{gap}. A similar comment must be made for Step 5.

\medskip
\medskip

\begin{rem}\label{new}
It is well known that a nilpotent finite dimensional Lie algebra can be always described via central extensions of smaller finite dimensional nilpotent Lie algebras (see \cite{seeley, weibel}) in an abstract way, but the concrete realisation  can give a series of complications of computational nature, since the knowledge of the Schur multipliers must be involved directly. Therefore we presented Algorithm  \ref{method}, which is significant and helps to visualise concrete steps of  construction of finite dimensional nilpotent Lie algebras via smaller ones.
\end{rem}

\medskip
\medskip

As application of this method, we show the following fact which is well known but reported just to convenience of the reader.

\begin{cor}\label{fundamental}
There are only two nilpotent Lie algebras of dimension $3$, namely the abelian Lie algebra $\mathfrak{i} \oplus \mathfrak{i} \oplus \mathfrak{i}$, where $\mathrm{dim} \ \mathfrak{i} =1$,   and $\mathfrak{h}(1)$. They induce all the nilpotent Lie algebras of dimension $4$, which are exactly $\mathfrak{i} \oplus \mathfrak{i} \oplus \mathfrak{i}$, $\mathfrak{h}(1) \oplus \mathfrak{i}$ and $\mathfrak{l}_{4,3} = \langle v_1, v_2, v_3, v_4 \ | \ [v_1,v_2]=v_3, [v_1,v_3]=v_4 \rangle$, via Algorithm \ref{method}.
\end{cor}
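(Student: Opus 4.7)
\medskip

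The plan is a direct and systematic application of Algorithm \ref{method}, working up from dimension $1$. For any nilpotent $\mathfrak{l}$ of finite dimension, we know $Z(\mathfrak{l}) \neq 0$, so $\mathfrak{l}$ is either abelian or a central extension $\mathfrak{k}_\theta$ of the strictly smaller quotient $\mathfrak{k} = \mathfrak{l}/Z(\mathfrak{l})$ by the space $V = Z(\mathfrak{l})$, with cocycle $\theta$ satisfying $\theta^+ \cap Z(\mathfrak{k}) = 0$ so that $Z(\mathfrak{k}_\theta)=V$ (Step 3 of Algorithm \ref{method}). A preliminary observation that I use repeatedly is that $\dim Z(\mathfrak{l})$ cannot equal $\dim \mathfrak{l}-1$, because then $\mathfrak{k}$ would be one-dimensional and any element would commute with $Z(\mathfrak{l})$ and with itself modulo the center, forcing $\mathfrak{l}$ abelian.

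\smallskip

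For dimension $3$, the cases $\dim Z(\mathfrak{l})\in\{2,3\}$ reduce to the abelian algebra $\mathfrak{i}\oplus\mathfrak{i}\oplus\mathfrak{i}$ by the preliminary remark, and $\dim Z(\mathfrak{l})=1$ forces $\mathfrak{k}\cong\mathfrak{i}\oplus\mathfrak{i}$ (quotients of nilpotent are nilpotent, and the only nilpotent Lie algebra of dimension $2$ is abelian). A quick computation gives $Z^2(\mathfrak{i}\oplus\mathfrak{i})\cong\mathbb{C}$ (a skew-symmetric form is determined by $\theta(x_1,x_2)$) and $B^2=0$ (since the bracket is trivial), hence $M(\mathfrak{i}\oplus\mathfrak{i})\cong\mathbb{C}$. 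Taking a nonzero representative and applying \eqref{liebracket} yields the single new algebra $\mathfrak{h}(1)$.

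\smallskip

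For dimension $4$, I split again by $\dim Z(\mathfrak{l})$. The values $3$ and $4$ collapse (as before) to $\mathfrak{i}^{\oplus 4}$. When $\dim Z(\mathfrak{l})=2$ we have $\mathfrak{k}\cong\mathfrak{i}\oplus\mathfrak{i}$ and $V\cong\mathbb{C}^2$; the cocycle is determined by $\theta(x_1,x_2)=(a,b)\in\mathbb{C}^2$, the condition $\theta^+=0$ forces $(a,b)\neq 0$, and a basis change in $V$ normalises to $(1,0)$, producing $\mathfrak{h}(1)\oplus\mathfrak{i}$. When $\dim Z(\mathfrak{l})=1$ the quotient $\mathfrak{k}$ is a $3$-dimensional nilpotent algebra from Step 1, so either $\mathfrak{i}^{\oplus 3}$ or $\mathfrak{h}(1)$. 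The first option is ruled out because any skew-symmetric $3\times 3$ matrix has nontrivial kernel, so $\theta^+\neq 0$ meets $Z(\mathfrak{k})=\mathfrak{k}$ nontrivially. For $\mathfrak{k}=\mathfrak{h}(1)$ with basis $y_1,y_2,y_3$ and $[y_1,y_2]=y_3$, a routine check of \eqref{cycle} shows $Z^2$ is three-dimensional with parameters $\theta(y_1,y_2),\theta(y_1,y_3),\theta(y_2,y_3)$, and $B^2$ is spanned by the coboundary from $\nu(y_3)=1$, so $M(\mathfrak{h}(1))\cong\mathbb{C}^2$ represented by $(b,c)=(\theta(y_1,y_3),\theta(y_2,y_3))$. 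Imposing $\theta^+\cap\langle y_3\rangle=0$ gives $(b,c)\neq 0$, and the resulting bracket $[y_1,y_3]=bv$, $[y_2,y_3]=cv$ yields $\mathfrak{l}_{4,3}$ after identifying $y_i$ with $v_i$ and using the obvious normalisation.

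\smallskip

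The main obstacle is Step 5 of the algorithm: showing that $\mathrm{Aut}(\mathfrak{h}(1))$ acts transitively on the nonzero classes in $M(\mathfrak{h}(1))$, so that no further algebras appear in the last case. This I handle by exhibiting two explicit automorphisms: the swap $y_1\mapsto y_2,\ y_2\mapsto -y_1,\ y_3\mapsto y_3$, which exchanges the two generators of $M(\mathfrak{h}(1))$ up to sign, and the scaling $y_1\mapsto\lambda y_1,\ y_2\mapsto \mu y_2,\ y_3\mapsto\lambda\mu y_3$, which rescales $(b,c)$ independently and allows any nonzero $(b,c)$ to be brought to $(1,0)$. This confirms a single orbit and closes the dimension-$4$ classification.
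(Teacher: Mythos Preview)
The paper does not actually supply a proof of this corollary: it is stated as a known application of Algorithm~\ref{method} and immediately followed by a remark referring to \cite[Lemma 2.10]{bag7}. Your write-up is therefore not a comparison target but a genuine fleshing-out of the algorithm in low dimensions, and the overall strategy (case split on $\dim Z(\mathfrak{l})$, computation of $M$, then orbit analysis) is correct and well organised.

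There is, however, a real gap in your Step~5. The diagonal scaling $y_1\mapsto\lambda y_1,\ y_2\mapsto\mu y_2,\ y_3\mapsto\lambda\mu y_3$ acts on the class $(b,c)=(\theta(y_1,y_3),\theta(y_2,y_3))$ by
\[
(b,c)\ \longmapsto\ (\lambda^{2}\mu\, b,\ \lambda\mu^{2}\, c),
\]
so a nonzero coordinate can never be scaled to zero. Combined with the swap $(b,c)\mapsto(c,-b)$, these automorphisms leave the set $\{(b,c):bc\neq 0\}$ invariant, hence cannot carry $(1,1)$ to $(1,0)$; your assertion that scaling ``allows any nonzero $(b,c)$ to be brought to $(1,0)$'' is false as written. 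The fix is to add a shear automorphism of $\mathfrak{h}(1)$, for instance $y_1\mapsto y_1,\ y_2\mapsto y_2+\alpha y_1,\ y_3\mapsto y_3$, which sends $(b,c)\mapsto(b,\,c+\alpha b)$. When $b\neq 0$ choose $\alpha=-c/b$ to reach $(b,0)$ and then rescale; when $b=0$ first swap. With this additional automorphism the action on $M(\mathfrak{h}(1))\setminus\{0\}$ is transitive and your conclusion that only $\mathfrak{l}_{4,3}$ arises in this branch stands.
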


It is useful to compare Corollary \ref{fundamental} with  \cite[Lemma 2.10]{bag7}, where we find some five dimensional nilpotent Lie algebras which are constructed as central extensions of nilpotent Lie algebras of dimension four. In fact both Corollary \ref{fundamental} and \cite[Lemma 2.10]{bag7} adapt ideas and techniques of \cite{st3, st4}.

Note that Algorithm \ref{method} applies only to finite dimensional nilpotent Lie algebras, in fact classical results  \cite[Chapter 6]{hofmor} show that this is no longer possible when we remove the finite dimension or the nilpotence. We are going to recall some notions, which can be found in \cite{hofmor} with a topological emphasis and in  \cite{weibel} as well. From \cite[Definition 7.8.1]{weibel}, a (complex) Lie algebra $\mathfrak{l}$ is \textit{simple}, if it has no ideals except itself and $0$, and if $\mathfrak{l}=[\mathfrak{l},\mathfrak{l}]$. The largest solvable ideal in a Lie algebra $\mathfrak{l}$ is called \textit{solvable radical} and  denoted by $S(\mathfrak{l})$. Of course,  $N(\mathfrak{l}) \subseteq S(\mathfrak{l})$ but Example \ref{nonnilpotent} shows that the equality is false in general.

According to \cite[Definition 7.8.1]{weibel}, a \textit{semisimple Lie algebra} $\mathfrak{l}$ is a Lie algebra such that $S(\mathfrak{l})=0$, that is, a Lie algebra with no nonzero solvable ideals. The structure of semisimple Lie algebras  \cite[Theorem 7.8.5]{weibel} shows that a complex finite dimensional Lie algebra is semisimple if and only if  
it can be decomposed in direct sum of finitely many  (complex) simple Lie algebras. Moreover,

\medskip
\medskip

\begin{thm}[See Theorem 7.8.13, Levi Decomposition, in \cite{weibel}]
If $\mathfrak{l}$ is a finite dimensional complex Lie algebra, then $\mathfrak{l}$ is the semidirect sum of $\mathfrak{a}=S(\mathfrak{l})$ and $\mathfrak{b} \simeq \mathfrak{l}/S(\mathfrak{l})$ which is semisimple.
\end{thm}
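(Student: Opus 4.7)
The plan is to establish two claims: (i) the quotient $\mathfrak{b} := \mathfrak{l}/S(\mathfrak{l})$ is semisimple; and (ii) the canonical projection $\pi : \mathfrak{l} \to \mathfrak{b}$ admits a Lie-algebra section, so that the image of the section is a subalgebra of $\mathfrak{l}$ complementary to $S(\mathfrak{l})$ as a vector space.

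For (i), I would argue by maximality of the solvable radical: if $\mathfrak{j}/S(\mathfrak{l})$ were a nonzero solvable ideal of $\mathfrak{b}$, then its pullback $\mathfrak{j}$ would be an ideal of $\mathfrak{l}$ fitting into a short exact sequence of Lie algebras with solvable kernel $S(\mathfrak{l})$ and solvable quotient $\mathfrak{j}/S(\mathfrak{l})$. Because the class of solvable Lie algebras is closed under extensions, $\mathfrak{j}$ would be solvable, contradicting $\mathfrak{j}\supsetneq S(\mathfrak{l})$. Hence $S(\mathfrak{b})=0$, i.e., $\mathfrak{b}$ is semisimple.

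For (ii), the approach is cohomological. Consider the extension
\begin{equation*}
0 \to S(\mathfrak{l}) \overset{\iota}{\to} \mathfrak{l} \overset{\pi}{\to} \mathfrak{b} \to 0,
\end{equation*}
and view $S(\mathfrak{l})$ as a $\mathfrak{b}$-module via the adjoint action, which is well defined because $S(\mathfrak{l})$ is an ideal in $\mathfrak{l}$. As reviewed in Section~2, equivalence classes of such extensions are classified by $H^2(\mathfrak{b};S(\mathfrak{l}))$ in the Chevalley--Eilenberg complex, with the trivial class corresponding precisely to split (i.e., semidirect) extensions. The task thus reduces to showing $H^2(\mathfrak{b};S(\mathfrak{l}))=0$. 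This is the content of Whitehead's second lemma: for a finite-dimensional semisimple complex Lie algebra $\mathfrak{b}$ and any finite-dimensional $\mathfrak{b}$-module $V$, $H^2(\mathfrak{b};V)=0$. Applied with $V=S(\mathfrak{l})$, it yields a Lie-algebra splitting $\sigma:\mathfrak{b}\to\mathfrak{l}$; then $\mathfrak{l}=S(\mathfrak{l})+\sigma(\mathfrak{b})$ with $S(\mathfrak{l})\cap\sigma(\mathfrak{b})=0$ and $S(\mathfrak{l})$ an ideal, exhibiting $\mathfrak{l}$ as the semidirect sum of $\mathfrak{a}=S(\mathfrak{l})$ and $\mathfrak{b}\simeq\mathfrak{l}/S(\mathfrak{l})$.

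The main obstacle is Whitehead's second lemma itself; it is typically proved by a Casimir-operator argument using that the Killing form of a semisimple Lie algebra is nondegenerate, combined with Weyl's complete reducibility theorem for finite-dimensional representations of semisimple Lie algebras. A more hands-on alternative is induction on $\dim S(\mathfrak{l})$: reduce to the cases where either $S(\mathfrak{l})$ contains a proper nonzero $\mathfrak{l}$-ideal (pass to the quotient, split, and lift) or $S(\mathfrak{l})$ is a minimal ideal of $\mathfrak{l}$ (which forces $S(\mathfrak{l})$ to be abelian, whence the $2$-cocycle obstruction can be killed using a Casimir-based averaging on the $\mathfrak{b}$-module $S(\mathfrak{l})$). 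In either formulation, the semisimplicity of $\mathfrak{b}$ established in step~(i) is precisely what drives the splitting.
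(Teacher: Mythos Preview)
The paper does not prove this theorem; it is stated with a citation to Weibel and used as background. So there is no in-paper argument to compare against.

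That said, your primary cohomological route has a genuine gap. You assert that $S(\mathfrak{l})$ is a $\mathfrak{b}$-module ``via the adjoint action, which is well defined because $S(\mathfrak{l})$ is an ideal in $\mathfrak{l}$.'' Being an ideal only gives $S(\mathfrak{l})$ the structure of an $\mathfrak{l}$-module; for this to descend to an action of $\mathfrak{b}=\mathfrak{l}/S(\mathfrak{l})$ you would need $S(\mathfrak{l})$ to act trivially on itself, i.e., $[S(\mathfrak{l}),S(\mathfrak{l})]=0$. The radical is solvable, not abelian in general, so this fails. For the same reason, the classification of extensions by $H^2(\mathfrak{b};S(\mathfrak{l}))$ that you invoke (and that Section~2 reviews) applies only to extensions with \emph{abelian} kernel; extensions with non-abelian kernel are not parametrized by a single second cohomology group, and Whitehead's second lemma cannot be applied directly to the full $S(\mathfrak{l})$.

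Your ``hands-on alternative'' is the correct argument and is essentially the proof one finds in Weibel: induct on $\dim S(\mathfrak{l})$, pick a nonzero abelian ideal $\mathfrak{a}\subseteq S(\mathfrak{l})$ of $\mathfrak{l}$ (e.g., the last nonzero term of the derived series of $S(\mathfrak{l})$, which is characteristic), split $\mathfrak{l}/\mathfrak{a}$ by the inductive hypothesis, and then use Whitehead's second lemma on the genuinely abelian extension that remains. I would promote that sketch to the main line of the proof and drop the direct $H^2(\mathfrak{b};S(\mathfrak{l}))$ claim.
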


The complex Lie algebra $$\mathfrak{sl}(2)=\{x \in \mathfrak{gl}(2)   \ | \ \mathrm{trace}(x)=0\}$$ is simple, according to  \cite[Definition 7.8.1]{weibel}. If $x^t$ is the transpose of $x \in \mathfrak{gl}(2)$, then $$\mathfrak{so}(3)=\{x \in \mathfrak{gl}(2)   \ | \ x+x^t=0\}$$
and  we may define a scalar product on $\mathfrak{sl}(2)$ by $\langle y, z \rangle= \mathrm{trace}(yz)$. The adjoint $$\mathrm{ad} \ : \ x \in \mathfrak{sl}(2) \mapsto \mathrm{ad}(x) \in \mathrm{Der}(\mathfrak{sl}(2))$$
is an action which preserves the scalar product, that is, for all $x,y,z \in \mathfrak{sl}(2)$. Of course,  $ \mathrm{Der}(\mathfrak{sl}(2))$ denotes the set of all derivations in  $\mathfrak{sl}(2)$ and we find that
$$\langle \mathrm{ad}(x) y, z \rangle + \langle y, \mathrm{ad}(x)z \rangle=0.$$
Therefore the adjoint action induces a Lie homomorphism from $\mathfrak{sl}(2)$ to $\mathfrak{so}(3)$, which actually is an  isomorphism such that \begin{equation}\label{technical}\mathfrak{sl}(2) \simeq \mathfrak{so}(3).
\end{equation}
This fact will be used in one of the main results of the next section. In particular, $\mathfrak{so}(3)$ is a simple Lie algebra of dimension 3.

\section{Main results}
In the present section we first list few facts on pseudo-bosons which are relevant for us. More results can be found in \cite{bag1,bag6,bag7}.

 Let $\mathcal{H}$ be a given Hilbert space (over the field $\mathbb{C}$ of complex numbers) with scalar product $\left<.,.\right>$ and related norm $\|.\|$.
Let $a$ and $b$ be two operators
on $\mathcal{H}$, with domains $D(a)$ and $D(b)$ respectively, $a^\dagger$ and $b^\dagger$ their adjoint, and let $\mathcal{D}$ be a dense subspace of $\mathcal{H}$
such that $a^\sharp\mathcal{D}\subseteq\mathcal{D}$ and $b^\sharp \mathcal{D} \subseteq \mathcal{D}$, where with $x^\sharp$ we indicate $x$ or $x^\dagger$. Of course, $\mathcal{D}\subseteq D(a^\sharp)$
and $\mathcal{D}\subseteq D(b^\sharp)$.

\medskip
\medskip

\begin{defn}\label{def21}
	The operators $(a,b)$ are $\mathcal{D}$-\textit{pseudo-bosonic}  if, for all $f\in\mathcal{D}$, we have
$	a\,b\,f-b\,a\,f=f.$
\end{defn}

\medskip
\medskip

The working assumptions are the following:

\vspace{2mm}

{\bf Assumption $\mathcal{D}$-pb 1.--}  there exists a nonzero $\varphi_{ 0}\in\mathcal{D}$ such that $a\,\varphi_{ 0}=0$.

\vspace{1mm}

{\bf Assumption $\mathcal{D}$-pb 2.--}  there exists a zero $\Psi_{ 0}\in\mathcal{D}$ such that $b^\dagger\,\Psi_{ 0}=0$.

\vspace{2mm}

It is obvious that, since $\mathcal{D}$ is stable under the action of the operators introduced above,  $\varphi_0\in D^\infty(b)=\cap_{k\geq0}D(b^k)$ and  $\Psi_0\in D^\infty(a^\dagger)$, so
that the vectors
\begin{equation}\label{A2}
\varphi_n=\frac{1}{\sqrt{n!}}\,b^n\varphi_0,\qquad \Psi_n=\frac{1}{\sqrt{n!}}\,{a^\dagger}^n\Psi_0,
\end{equation}
$n\geq0$, can be defined and they all belong to $\mathcal{D}$. Then, they also belong to the domains of $a^\sharp$, $b^\sharp$ and $N^\sharp$, where $N=ba$.

We see that, from a practical point of view, $\mathcal{D}$ is the natural space to work with and, in this sense, it is even more relevant than $\mathcal{H}$. Let's put $$\mathcal{F}_\Psi=\{\Psi_{ n} \ |  \ n\geq0\} \ \mathrm{and} \  \mathcal{F}_\varphi=\{\varphi_{ n} \ | \ n\geq0\}.$$
It is  simple to deduce the following lowering and raising relations:
\begin{equation}\label{A3}
\left\{
\begin{array}{ll}
b\,\varphi_n=\sqrt{n+1}\varphi_{n+1}, \qquad\qquad\quad\,\, n\geq 0,\\
a\,\varphi_0=0,\quad a\varphi_n=\sqrt{n}\,\varphi_{n-1}, \qquad\,\, n\geq 1,\\
a^\dagger\Psi_n=\sqrt{n+1}\Psi_{n+1}, \qquad\qquad\quad\, n\geq 0,\\
b^\dagger\Psi_0=0,\quad b^\dagger\Psi_n=\sqrt{n}\,\Psi_{n-1}, \qquad n\geq 1,\\
\end{array}
\right.
\end{equation}
as well as the eigenvalue equations $N\varphi_n=n\varphi_n$ and  $N^\dagger\Psi_n=n\Psi_n$, $n\geq0$.

 In particular, as a consequence
of these last two equations,  if we choose the normalization of $\varphi_0$ and $\Psi_0$ in such a way that $\left<\varphi_0,\Psi_0\right>=1$, then we may deduce
\begin{equation} \label{A4} \left<\varphi_n,\Psi_m\right>=\delta_{n,m},
\end{equation}
for all $n, m\geq0$. Hence $\mathcal{F}_\Psi$ and $\mathcal{F}_\varphi$ are biorthogonal, even if they are not necessarily bases for $\mathcal{H}$. This aspect, not particularly relevant here, is widely discussed in \cite{bag1}.

In \cite{bag6} we have discussed an algebraic point of view for the operators $a$, $b$ and their adjoints, which we briefly review here.

We have stressed in several occasions that  $a$, $b$ and their adjoints are not bounded operators, so that there is no  $C^*$-algebra $B(\mathcal{H})$ such that $a,b\in B(\mathcal{H})$. We refer to \cite{schu} for a monograph on unbounded operators theory.
However, as discussed in \cite{bag6}, we can still construct some algebraic settings for them. In particular,   {\em quasi $*$-algebras} and $O^*$-$algebras$ work well (see \cite{aitbook} for more details on these notions).

For instance we have:

\medskip
\medskip

\begin{defn}\label{o*}Let $\mathcal{H}$ be a separable Hilbert space and $N_0$ an
	unbounded, densely defined, self-adjoint operator. Let $D(N_0^k)$ be
	the domain of the operator $N_0^k$, $k \ge 0$, and $\mathcal{D}$ the domain of
	all the powers of $N_0$, that is,  $$ \mathcal{D} = D^\infty(N_0) = \bigcap_{k\geq 0}
	D(N_0^k). $$ This set is dense in $\mathcal{H}$. Let us now introduce
	$\mathcal{L}^\dagger(\mathcal{D})$, the $*$-algebra of all the \textit{  closable operators}
	defined on $\mathcal{D}$ which, together with their adjoints, map $\mathcal{D}$ into
	itself. Here the adjoint of $X\in\mathcal{L}^\dagger(\mathcal{D})$ is
	$X^\dagger=X^*_{| \mathcal{D}}$. Such an $\mathcal{L}^\dagger(\mathcal{D})$ is called  $O^*$-algebra.
\end{defn}

\medskip
\medskip

In $\mathcal{D}$ the topology is defined by the following $N_0$-depending
seminorms: $$\phi \in \mathcal{D} \rightarrow \|\phi\|_n\equiv \|N_0^n\phi\|,$$
where $n \ge 0$, and  the topology $\tau_0$ in $\mathcal{L}^\dagger(\mathcal{D})$ is introduced by the seminorms
$$ X\in \mathcal{L}^\dagger(\mathcal{D}) \rightarrow \|X\|^{f,k} \equiv
\max\left\{\|f(N_0)XN_0^k\|,\|N_0^kXf(N_0)\|\right\},$$ where
$k \ge 0$ and   $f \in \mathcal{C}$, the set of all the positive,
bounded and continuous functions  on $\mathbb{R}_+$, which are
decreasing faster than any inverse power of $x$:
$\mathcal{L}^\dagger(\mathcal{D})$ is a {   complete *-algebra} with respect to the topology $\tau_0$.

As a consequence, if $x,y\in \mathcal{L}^\dagger(\mathcal{D})$, we can multiply the two elements and the results, $xy$ and $yx$, both belong to $\mathcal{L}^\dagger(\mathcal{D})$. This is what we were looking for: a suitable structure in which we have the possibility of introducing Lie brackets for objects (linear operators, in our case), which are not everywhere defined.
In fact, for each pair $x,y\in\mathcal{L}^\dagger(\mathcal{D})$, we can define a map $[.,.]$ as follows:
\begin{equation}\label{B1}
[x,y]=xy-yx.
\end{equation}

It is clear that $[x,y]\in\mathcal{L}^\dagger(\mathcal{D})$, that it is bilinear, that $[x,x]=0$ for all $x\in\mathcal{L}^\dagger(\mathcal{D})$, and that it satisfies the Jacobi identity
$$
[x,[y,z]]+[y,[z,x]]+[z,[x,y]]=0,
$$
for all $x,y,z\in\mathcal{L}^\dagger(\mathcal{D})$. Therefore, $[.,.]$ is a Lie bracket defined on $\mathcal{L}^\dagger(\mathcal{D})$.

The operators $a$ and $b$, together with their adjoints, belong to the algebra of operators $\mathcal{L}^\dagger(\mathcal{D})$ for a suitable $\mathcal{D}$, see \cite{bag5,bag6}. With this in mind we can prove the following result.

\medskip
\medskip

\begin{thm}\label{main1}All finite dimensional complex nilpotent Lie algebras may be realized by central extensions of  pseudo-bosonic operators.
\end{thm}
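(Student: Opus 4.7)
My plan is to argue by induction on $\dim \mathfrak{l}$, using Algorithm \ref{method} of Skjelbred and Sund to write $\mathfrak{l}$ as a central extension of a nilpotent Lie algebra of smaller dimension, and then feed the inductive hypothesis.

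\medskip

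\textbf{Base cases.} In dimension $1$, any (necessarily abelian) $\mathfrak{l}$ is realized by $\mathbb{C}\cdot I \subset \mathcal{L}^\dagger(\mathcal{D})$, where $I = ab - ba$ arises from any $\mathcal{D}$-pseudo-bosonic pair satisfying Definition \ref{def21}; this is a trivial central extension of the zero Lie algebra. In dimension $3$, Corollary \ref{fundamental} leaves $\mathfrak{h}(1)$ as the only non-abelian nilpotent option, and it is realized directly in $\mathcal{L}^\dagger(\mathcal{D})$ by $\{a, b, [a,b]\}$. As a further anchor, $\mathfrak{l}_{4,3}$ of Corollary \ref{fundamental} is realized via $v_1 = a$, $v_2 = \tfrac{1}{2}b^2$, $v_3 = b$, $v_4 = I$, since $[a, \tfrac{1}{2}b^2] = b$ and $[a,b] = I$, and all other commutators vanish by Definition \ref{def21}.

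\medskip

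\textbf{Inductive step.} Assume the theorem for all nilpotent Lie algebras of dimension less than $n$, and let $\mathfrak{l}$ have dimension $n \geq 2$. Because $\mathfrak{l}$ is nilpotent, $Z(\mathfrak{l}) \neq 0$, so $\bar{\mathfrak{l}} := \mathfrak{l}/Z(\mathfrak{l})$ is a nilpotent Lie algebra of strictly smaller dimension, and by the inductive hypothesis it is obtained by a finite tower of central extensions rooted at a pseudo-bosonic Lie algebra $\mathfrak{p} \subset \mathcal{L}^\dagger(\mathcal{D})$. Step 6 of Algorithm \ref{method} then exhibits $\mathfrak{l} \simeq \bar{\mathfrak{l}}_\theta$ for some cocycle $\theta \in M(\bar{\mathfrak{l}})$ via the construction \eqref{delicate}--\eqref{liebracket}, with the adjoined summand $V = Z(\mathfrak{l})$ central in $\bar{\mathfrak{l}}_\theta$ by the explicit form of the bracket. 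Appending this last extension to the tower realizing $\bar{\mathfrak{l}}$ produces the desired realization of $\mathfrak{l}$.

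\medskip

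\textbf{The main obstacle.} Strictly speaking, two central extensions do not compose to a single central extension: if $L \to M \to P$ is a composition of two central extensions, the kernel of $L \to P$ need not lie in $Z(L)$, as already $\mathfrak{h}(1)$ illustrates. I therefore read ``realized by central extensions of pseudo-bosonic operators'' in the \emph{iterated} sense --- a finite tower of central extensions rooted at a pseudo-bosonic Lie algebra --- which is the natural reading compatible with Algorithm \ref{method}. Under this interpretation the induction closes without friction; a literal single-step reading would additionally require a cohomological inflation--restriction argument guaranteeing that each relevant $\theta \in M(\bar{\mathfrak{l}})$ is inflated from a cocycle on $\mathfrak{p}$, which, while in principle available from the Chevalley--Eilenberg framework of \cite{weibel}, is not needed for the statement as formulated.
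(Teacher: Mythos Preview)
Your argument is correct and shares the same inductive skeleton as the paper's proof: base case in dimension $3$ via Corollary~\ref{fundamental}, and inductive step by applying Algorithm~\ref{method} to $\mathfrak{l}/Z(\mathfrak{l})$. The difference lies in what is carried through the induction. You propagate only the abstract datum ``$\bar{\mathfrak{l}}$ sits atop a tower of central extensions rooted in a pseudo-bosonic subalgebra of $\mathcal{L}^\dagger(\mathcal{D})$,'' which is what forces you to confront the non-composability of central extensions and to adopt the iterated reading in your final paragraph. The paper instead propagates a concrete pseudo-bosonic \emph{presentation}: assuming $\mathfrak{l}/Z(\mathfrak{l})$ has generators $x_i+Z(\mathfrak{l})$ and multilinear relations $R_1,\dots,R_k$ expressible in pseudo-bosonic terms, it uses the bracket identity \eqref{modulotheta} to convert each $R_j$ into a relation $S_j$ on $(\mathfrak{l}/Z(\mathfrak{l}))_\theta$ which, together with the trivial commutation relations $A$ involving the new central generator $x_n$, is again a pseudo-bosonic presentation (see \eqref{construction}--\eqref{constructionbis}). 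This stronger inductive hypothesis dissolves your obstacle---no tower is needed, because at every stage the algebra itself, not merely something below it, is realized by pseudo-bosonic operators---and supports the literal reading of the theorem. Your explicit realization of $\mathfrak{l}_{4,3}$ via $a,\tfrac12 b^2,b,I$ is a nice concrete anchor not present in the paper.
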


\begin{proof} The main idea is to generalise Corollary \ref{fundamental} to arbitrary dimension. Assume $\mathfrak{l}$ is a nilpotent Lie algebra of $\mathrm{dim} \ \mathfrak{l} =n$ and  work by induction on $n$. Since there are no nilpotent Lie algebras when $n=2$, we must assume $n \ge 3$.

For $n=3$ the basis of induction is given by Corollary \ref{fundamental}. In fact we have seen that all finite dimensional nilpotent Lie algebras of dimension $4$ can be obtained by Algorithm \ref{method}, beginning from  $\mathfrak{i} \oplus \mathfrak{i}\oplus \mathfrak{i}$ and $\mathfrak{h}(1)$ which admit realizations by pseudo-bosons (see \cite{bag6, bag7}).
Then the result is true for $n=3$.

Assume the result is true for all nilpotent Lie algebras of dimension $<n$ when $n \ge 4$ and that $\mathrm{dim} \ Z(\mathfrak{l})=z \ge 1$. In particular, the result is true for $\mathfrak{l}/Z(\mathfrak{l})$, since  $$\mathrm{dim} \ \mathfrak{l}/Z(\mathfrak{l})= \mathrm{dim} \ \mathfrak{l} - \mathrm{dim} \ Z(\mathfrak{l})=n-z \le n-1 <n.$$ This means that we can realize  $\mathfrak{l}/Z(\mathfrak{l})$ as central extension of pseudo-bosonic operators. Now we apply all the steps of  Algorithm \ref{method} and get the central extension ${\mathfrak{l}/Z(\mathfrak{l})}_\theta$ for a chosen $\theta \in M(\mathfrak{l}/Z(\mathfrak{l}))$, where the Lie bracket of ${\mathfrak{l}/Z(\mathfrak{l})}_\theta$ has been defined  in \eqref{liebracket}. This ${\mathfrak{l}/Z(\mathfrak{l})}_\theta$ is the arbitrary finite dimensional nilpotent Lie algebra of dimension $n$, provided by Algorithm \ref{method}. Since  ${\mathfrak{l}/Z(\mathfrak{l})}_\theta$ may be written as \eqref{delicate}, it is the direct sum of an abelian Lie algebra by a nilpotent Lie algebra, realized by pseudo-bosonic operators (from induction hypothesis), where we introduce   the Lie bracket  ${[ \ , \ ]}_{{\mathfrak{l}/Z(\mathfrak{l})}_\theta}$ extending  ${[ \ , \ ]}_{\mathfrak{l}/Z(\mathfrak{l})}$ with the linear term $\theta(x+Z(\mathfrak{l}), y+Z(\mathfrak{l}))$ for all $x+Z(\mathfrak{l}),y +Z(\mathfrak{l}) \in \mathfrak{l}/Z(\mathfrak{l})$  (see \eqref{liebracket}). This means that the new commuting relations in ${\mathfrak{l}/Z(\mathfrak{l})}_\theta$ are the same we had in $\mathfrak{l}/Z(\mathfrak{l})$  with the addition of the commuting relations between $\mathfrak{l}/Z(\mathfrak{l})$ and $Z(\mathfrak{l})$ up to the action of $\theta$.

 Let's see this fact more formally: we have $$\mathfrak{l}/Z(\mathfrak{l})=\langle x_1 + Z(\mathfrak{l}), x_2 +Z(\mathfrak{l}), \ldots, x_{n-1} +Z(\mathfrak{l}) \ | \ R_1( x_1  +Z(\mathfrak{l}), x_2 +Z(\mathfrak{l}), \ldots, x_{n-1} +Z(\mathfrak{l})),$$
$$ \ldots,  R_k(x_1  +Z(\mathfrak{l}), x_2 +Z(\mathfrak{l}), \ldots, x_{n-1} +Z(\mathfrak{l})) \rangle,$$
which is an ($n-1$)-dimensional nilpotent Lie algebra with generators $x_1 + Z(\mathfrak{l}), x_2 +Z(\mathfrak{l}), \ldots, x_{n-1} +Z(\mathfrak{l})$ and $k$ relations $R_1( x_1  +Z(\mathfrak{l}), x_2 +Z(\mathfrak{l}), \ldots, x_{n-1} +Z(\mathfrak{l})),$
$ \dots, R_k( x_1  +Z(\mathfrak{l}), x_2 +Z(\mathfrak{l}), \ldots, x_{n-1} +Z(\mathfrak{l}))$ which can be expressed in terms of pseudobosons, that is, these relations are multilinear equations involving  the Lie brackets ${[ \ , \ ]}_{\mathfrak{l}/Z(\mathfrak{l})}$ and the generators of $\mathfrak{l}/Z(\mathfrak{l})$. Then we consider the abelian Lie algebra $Z(\mathfrak{l})=\langle x_n  \rangle$ of dimension one.

Forming the direct sum of $\mathfrak{l}/Z(\mathfrak{l})$ and $Z(\mathfrak{l})$, we get all the generators in $\mathfrak{l}/Z(\mathfrak{l})$ plus the generator of $Z(\mathfrak{l})$ and at the level of relations we have all the relations in $\mathfrak{l}/Z(\mathfrak{l})$ plus  the relations  $A(x_1  +Z(\mathfrak{l}), x_2 +Z(\mathfrak{l}), \ldots, x_{n-1} +Z(\mathfrak{l}); x_n)$ which come from the fact that the elements of $\mathfrak{l}/Z(\mathfrak{l})$ must commute with those of $Z(\mathfrak{l})$. Note that the presence of $A(x_1  +Z(\mathfrak{l}), x_2 +Z(\mathfrak{l}), \ldots, x_{n-1} +Z(\mathfrak{l}); x_n)$  is essential  by the definition of direct sum $\mathfrak{l}/Z(\mathfrak{l}) \oplus Z(\mathfrak{l})$; moreover $A(x_1  +Z(\mathfrak{l}), x_2 +Z(\mathfrak{l}), \ldots, x_{n-1} +Z(\mathfrak{l}); x_n)$ is clearly pseudo-bosonic. Until now we are considering the Lie algebra  
\begin{equation}\label{construction}
\langle x_1 + Z(\mathfrak{l}), x_2 +Z(\mathfrak{l}), \ldots, x_{n-1} +Z(\mathfrak{l}); x_n \ | \ 
\end{equation}
$$R_1( x_1  +Z(\mathfrak{l}), x_2 +Z(\mathfrak{l}), \ldots, x_{n-1} +Z(\mathfrak{l})),
 \ldots,  R_k(x_1  +Z(\mathfrak{l}), x_2 +Z(\mathfrak{l}), \ldots, x_{n-1} +Z(\mathfrak{l})),$$
 $$ A(x_1  +Z(\mathfrak{l}), x_2 +Z(\mathfrak{l}), \ldots, x_{n-1} +Z(\mathfrak{l}); x_n) \rangle$$
and this is $n$-dimensional  nilpotent and realized with pseudo-bosonic operators and pseudo-bosonic relations.

At this point we consider \eqref{liebracket} and observe that in case $\theta$ is trivial, then \eqref{construction} is exactly the Lie algebra ${\mathfrak{l}/Z(\mathfrak{l})}_\theta$ from Algorithm \ref{method} and  the result follows. In case   $\theta$ is nontrivial, then 
we note that endowing ${\mathfrak{l}/Z(\mathfrak{l})}_\theta$  of the Lie bracket \eqref{construction} is equivalent to endow ${\mathfrak{l}/Z(\mathfrak{l})}_\theta$ of the Lie bracket  on $\mathfrak{l}/Z(\mathfrak{l})$ modulo $\theta$, that is,  
\begin{equation}\label{modulotheta}
{[x+Z(\mathfrak{l}),y+Z(\mathfrak{l})]}_{\mathfrak{l}/Z(\mathfrak{l})}={[x+Z(\mathfrak{l}) +  u, \ y +Z(\mathfrak{l}) + v]}_{{\mathfrak{l}/Z(\mathfrak{l})}_\theta} -  \theta(x+Z(\mathfrak{l}),y +Z(\mathfrak{l})),  
\end{equation}
where $u, v \in Z(\mathfrak{l})$. This means that the pseudo-bosonic relations $$R_1( x_1  +Z(\mathfrak{l}), x_2 +Z(\mathfrak{l}), \ldots, x_{n-1} +Z(\mathfrak{l})),$$
$$ \ldots,  R_k(x_1  +Z(\mathfrak{l}), x_2 +Z(\mathfrak{l}), \ldots, x_{n-1} +Z(\mathfrak{l}))$$
can be written as multilinear equations of the form
\begin{equation}\label{modulothetabis}S_1( (x_1  +Z(\mathfrak{l})) + u_1, (x_2 +Z(\mathfrak{l})) + u_2, \ldots, (x_{n-1} +Z(\mathfrak{l})) + u_{n-1} ; \theta),
\end{equation}
$$ \ldots,  S_k((x_1  +Z(\mathfrak{l})) + u_1, (x_2 +Z(\mathfrak{l})) + u_2, \ldots, (x_{n-1} +Z(\mathfrak{l}))) + u_{n-1} ; \theta)$$
for all $u_1, u_2, \dots, u_{n-1} \in Z(\mathfrak{l}) $, and this means that we can write \eqref{modulothetabis} as pseudo-bosonic relations. Therefore the Lie algebra
\begin{equation}\label{constructionbis}
{\mathfrak{l}/Z(\mathfrak{l})}_\theta=\langle x_1 + Z(\mathfrak{l}), x_2 +Z(\mathfrak{l}), \ldots, x_{n-1} +Z(\mathfrak{l}); x_n \ | \ 
\end{equation}
$$S_1( (x_1  +Z(\mathfrak{l}))+ u_1, (x_2 +Z(\mathfrak{l}))+ u_2, \ldots, (x_{n-1} +Z(\mathfrak{l}))+ u_{n-1}; \theta),$$
 $$\ldots,  S_k((x_1  +Z(\mathfrak{l}))+ u_1, (x_2 +Z(\mathfrak{l}))+ u_2, \ldots, (x_{n-1} +Z(\mathfrak{l}))+ u_{n-1}; \theta),$$
 $$ A(x_1  +Z(\mathfrak{l}), x_2 +Z(\mathfrak{l}), \ldots, x_{n-1} +Z(\mathfrak{l}); x_n), \ \ \forall u_1, u_2, \ldots , u_{n-1} \in Z(\mathfrak{l}) \rangle$$
becomes the outcome of Algorithm \ref{method}: it is finite dimensional nilpotent and realized by pseudo-bosonic operators. The result follows.
 \end{proof}

\medskip
\medskip

As mentioned in \cite[Theorem 5.1]{bag6}, a realization of a dynamical system, related to the Lie algebras in Theorem \ref{main1}, is given, for instance, by the five dimensional nilpotent Lie algebra of the shifted harmonic oscillator.
\begin{equation}\label{ash1}
\mathfrak{a}_{sh} = \langle v_1, v_2, v_3, v_4, v \  |  \  [v_1,v_2]=[v_3,v_4]=[v_1,v_4]=[v_3,v_2]=v,
\end{equation}
where $v_1=a$, $v_2=b$, $v_3=b^\dagger$, $v_4=a^\dagger$ and $v=\mathbb{I}$, with 
\begin{equation}\label{ash2}
a=c-\alpha\mathbb{I},\qquad b=c^\dagger-\overline{\beta}\mathbb{I},
\end{equation}
with $\alpha, \beta \in \mathbb{C}$, $\alpha\neq\beta$, and where $c$ and $c^\dagger$ are the ordinary bosonic operators, satisfying $[c,c^\dagger]=\mathbb{I}$. Of course, $a$ and $b$ belong to $\mathcal{L}^\dagger(\mathcal{D})$, and similarly this happens for $a^\dagger$ and $b^\dagger$. With our choice,  $v_1$, $v_2$, $v_3$, $v_4$ and $v$ realize \eqref{ash1}. In fact it has been shown (see \cite[Theorem 6.1]{bag6}) that 
$$\mathfrak{a}_{sh} = \mathfrak{h}(1) \oplus ( \mathfrak{i} \oplus \mathfrak{i}),$$
where $\mathfrak{i}$ is abelian of dimension one.

Referring to  \cite{new}, we show that pseudo-bosonic operators may realize different finite dimensional Lie algebras which are centerless, so nilpotent.

\medskip
\medskip

\begin{thm}\label{main2}There exists a three dimensional simple Lie algebra, which can be realized by  pseudo-bosonic operators neither as central extension nor as semidirect sum.
\end{thm}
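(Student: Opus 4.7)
My plan is to take the three-dimensional simple Lie algebra $\mathfrak{so}(3) \simeq \mathfrak{sl}(2)$ from the isomorphism \eqref{technical} and realize it directly inside $\mathcal{L}^\dagger(\mathcal{D})$ by a quadratic expression in a single pair of pseudo-bosons. Concretely, given a $\mathcal{D}$-pseudo-bosonic pair $(a,b)$ with $[a,b]f=f$ on $\mathcal{D}$, I would set
\begin{equation*}
J_+ = \tfrac{1}{2}\,b^2,\qquad J_- = -\tfrac{1}{2}\,a^2,\qquad J_3=\tfrac{1}{2}(ab+ba),
\end{equation*}
and verify on $\mathcal{D}$ that $[J_3,J_+]=2J_+$, $[J_3,J_-]=-2J_-$, $[J_+,J_-]=J_3$ by a short computation based on the elementary identities $[a,b^n]=nb^{n-1}$ and $[a^n,b]=na^{n-1}$ that follow from Definition \ref{def21}. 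These three operators lie in $\mathcal{L}^\dagger(\mathcal{D})$ because this $*$-algebra is stable under products and contains $a,b,a^\dagger,b^\dagger$, and the bracket in \eqref{B1} restricts to a three-dimensional complex Lie subalgebra of $\mathcal{L}^\dagger(\mathcal{D})$ isomorphic to $\mathfrak{sl}(2)$, hence to $\mathfrak{so}(3)$ by \eqref{technical}.

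Having the realization, the remainder is purely structural and uses simplicity. Since $\mathfrak{so}(3)$ is simple, every ideal is either $0$ or all of $\mathfrak{so}(3)$; in particular $Z(\mathfrak{so}(3))=0$, because a nonzero center would have to equal the whole algebra, forcing it to be abelian and contradicting simplicity. Consequently $\mathfrak{so}(3)$ cannot arise from the construction \eqref{delicate}--\eqref{liebracket}: any central extension $\mathfrak{l}_\theta$ produced that way contains $V\subseteq Z(\mathfrak{l}_\theta)$ as a nonzero abelian ideal, whereas a simple algebra admits no such ideal. This rules out a realization of the type used in Theorem \ref{main1}.

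For the semidirect sum, if $\mathfrak{so}(3)=\mathfrak{a}+\mathfrak{b}$ with $\mathfrak{a}$ a nonzero proper ideal and $\mathfrak{a}\cap\mathfrak{b}=0$, then $\mathfrak{a}$ would again contradict simplicity; the only possible splittings are the trivial ones $\mathfrak{a}=0,\mathfrak{b}=\mathfrak{so}(3)$ or $\mathfrak{a}=\mathfrak{so}(3),\mathfrak{b}=0$, which are not genuine semidirect decompositions. Thus the pseudo-bosonic realization above is neither a central extension in the sense of Definition \ref{extensions} nor a semidirect sum.

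The nontrivial content is really concentrated in the first paragraph: one must exhibit the operators explicitly, and the potential obstacle is making sure the chosen quadratic combinations have the right $\mathfrak{sl}(2)$ commutators under the weaker pseudo-bosonic identity $[a,b]=\mathbb{I}$ on $\mathcal{D}$ (which, fortunately, is formally identical to the ordinary bosonic case, so the standard Schwinger-type computation goes through verbatim on the dense domain $\mathcal{D}$). The impossibility statements are then immediate consequences of the fact that simplicity kills both the center and every nontrivial ideal, so no construction of the form $\mathfrak{l}_\theta$ or $\mathfrak{a}\sdir{}\mathfrak{b}$ can reproduce $\mathfrak{so}(3)$.
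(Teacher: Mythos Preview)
Your proposal is correct and follows essentially the same route as the paper: both take a single pseudo-bosonic pair and form quadratic combinations (your $J_3=\tfrac12(ab+ba)$, $J_\pm$ are just a rescaling of the paper's $L_0=\tfrac12(BA+\tfrac12\mathbb{I})$, $L_\pm=\tfrac12 B^2,\tfrac12 A^2$, since $ab+ba=2BA+\mathbb{I}$), verify the $\mathfrak{sl}(2)$ relations directly, and invoke \eqref{technical} to identify the result with $\mathfrak{so}(3)$. Your explicit argument that simplicity forbids both a nontrivial central piece $V\subseteq Z(\mathfrak{l}_\theta)$ and a nontrivial ideal in a semidirect decomposition is a welcome addition; the paper's proof leaves this implicit, merely noting $Z(\mathfrak{l})=0$ and $\mathfrak{l}=[\mathfrak{l},\mathfrak{l}]$ before asserting $\mathfrak{l}\simeq\mathfrak{so}(3)$.
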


\begin{proof} 
Consider the pseudo-bosonic operators $A,B\in\mathcal{L}^\dagger(\mathcal{D})$ such that $[A,B]= \mathbb{I}$ and define 
$$L_0=\frac{1}{2}(BA+\frac{1}{2}\mathbb{I}), \ \ L_-=\frac{1}{2}A^2, \ \ L_+=\frac{1}{2}B^2.$$
Of course, they also belong to $\mathcal{L}^\dagger(\mathcal{D})$.

The following commuting relations are satisfied:
$$[L_0,L_-]=\frac{1}{4}[BA,A^2]=\frac{1}{4}[B,A^2]A=\frac{1}{4}(-2A)A=-\frac{1}{2}A^2=-L_-;$$
$$[L_0,L_+]=\frac{1}{4}[BA,B^2]=\frac{1}{4}B[A,B^2]=\frac{1}{4} \  B \ (2B)=\frac{1}{2}B^2=L_+;$$
$$[L_+,L_-]=\frac{1}{4}[B^2,A^2]=\frac{1}{4} \ (B[A,A^2] + [B,A^2]B)=\frac{1}{4} \ (B(-2A)-2AB)$$
$$= -\frac{1}{2} \ (AB+BA) =  -\frac{1}{2} \ (\mathbb{I}+BA+BA)=-(BA+\frac{1}{2}\mathbb{I})=-L_0.$$
Now the Lie algebra
$$\mathfrak{l}=\langle L_0, L_+, L_-  \ | \ [L_0,L_-]= -L_-, [L_0,L_+]=L_+, [L_+,L_-]=-L_0\rangle$$
has dimension three and is nilpotent because the commuting relations show that $Z(\mathfrak{l})=0$. The same commuting relations show that $\mathfrak{l}=[\mathfrak{l},\mathfrak{l}]$, that is, $\mathfrak{l}$ is a perfect Lie algebra (see also \eqref{technical}) and one can check without difficulties that $\mathfrak{l} \simeq \mathfrak{so}(3).$ 
\end{proof}

\medskip
\medskip

In \cite{new} similar operators were used to construct a generalized time dependent non-hermitian Swanson Hamiltonian, relevant in connection with $PT$-quantum mechanics, \cite{ben}. Our operators $A$ and $B$ above, and their combinations $L_0$, $L_-$, $L_+$, can be used to generalize further the Swanson Hamiltonian., and they can also used to define the so-called bi-squeezed states, relevant in quantum optics, \cite{bisqueezed}.

Note that Theorem \ref{main2} describes a simple Lie algebra, which of course is nonsolvable nonnilpotent. One could wonder whether there are examples of solvable nonnilpotent nonsimple Lie algebras which we may realize via pseudobosons. The answer is positive. 

\medskip
\medskip

\begin{thm}\label{main3}There exists a three dimensional  metabelian nonnilpotent  Lie algebra, which can be realized by pseudo-bosonic operators.
\end{thm}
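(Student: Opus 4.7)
The plan is to mimic the strategy of Theorem \ref{main2}, but now choosing the auxiliary operators so that the outcome is metabelian rather than simple. Starting again with pseudo-bosons $A,B\in\mathcal{L}^\dagger(\mathcal{D})$ satisfying $[A,B]=\mathbb{I}$, I would set
\[
H=BA,\qquad E=B,\qquad F=B^2,
\]
all of which lie in $\mathcal{L}^\dagger(\mathcal{D})$ because this is a $*$-algebra closed under products. A short Leibniz-rule computation, analogous to those in the proof of Theorem \ref{main2}, then yields
\[
[H,E]=B[A,B]=E,\qquad [H,F]=B[A,B^2]=2F,\qquad [E,F]=[B,B^2]=0.
\]

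The second step is to check that $H,E,F$ are linearly independent in $\mathcal{L}^\dagger(\mathcal{D})$, so that the Lie subalgebra $\mathfrak{l}=\langle H,E,F\rangle$ is genuinely three dimensional. The fastest route is to test an arbitrary linear combination on the vectors $\varphi_0,\varphi_1\in\mathcal{D}$ of \eqref{A2} and invoke the raising and lowering relations \eqref{A3}: $H\varphi_0=0$ whereas $E\varphi_0=\varphi_1$ and $F\varphi_0=\sqrt{2}\,\varphi_2$ lie in different eigenspaces of $N$, while $H\varphi_1=\varphi_1$, so no nontrivial relation among the three can hold.

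Finally, the commutation relations immediately give $[\mathfrak{l},\mathfrak{l}]=\langle E,F\rangle$, which is abelian, so $\mathfrak{l}^{(2)}=0$ and $\mathfrak{l}$ is metabelian in the sense of Definition \ref{uppercentralseries}. Since $[H,E]=E$ and $[H,F]=2F$ regenerate $E$ and $F$, the lower central series stabilises at $\gamma_k(\mathfrak{l})=\langle E,F\rangle$ for every $k\geq 2$; equivalently $Z(\mathfrak{l})=0$, so $\mathfrak{l}$ is nonnilpotent. Combining these facts with the explicit pseudo-bosonic realisation of $H$, $E$, $F$ will give the desired statement.

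The main (though still mild) obstacle I foresee is the choice of the triple itself. The naive attempt $H=BA$, $E=A$, $F=B$ fails because $[A,B]=\mathbb{I}$ forces the identity operator into the algebra and inflates the dimension to four. Replacing $F$ by a power of $B$ bypasses this, since polynomials in $B$ alone commute and the Leibniz identity $[BA,B^k]=kB^k$ produces the eigenvalue structure under $\mathrm{ad}\,H$ for free; one could equally well use any pair of powers $B^p,B^q$ with $p\neq q$ to obtain metabelian nonnilpotent three dimensional algebras of different isomorphism types.
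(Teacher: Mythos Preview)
Your approach is essentially identical to the paper's: the authors take $a_1=BA$, $a_2=B$, $a_3=B^2$ with exactly the same commutation relations and draw the same conclusions about $Z(\mathfrak{l})=0$ and $[\mathfrak{l},\mathfrak{l}]=\langle a_2,a_3\rangle$ being abelian. Your explicit linear-independence check via $\varphi_0,\varphi_1$ and your closing remark on replacing $B,B^2$ by arbitrary powers $B^p,B^q$ are pleasant additions not present in the paper, but the core construction and verification are the same.
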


\begin{proof} We consider a variation of Example \ref{nonnilpotent} and define the Lie algebra
\[\mathfrak{l}=\langle a_1, a_2, a_3 \ | \ [a_1, a_2]=a_2, \ [a_1,a_3]=2a_3 \rangle,\]
having in mind that the missing commutator relations give zero. Consider the pseudo-bosonic operators $A,B$ such that $[A,B]= \mathbb{I}$ and define 
$$a_1=BA, \ \ a_2=B, \ \ a_3=B^2.$$
Then  we get $[B,B^2]=[B^2,B]=0$,
$$[a_1,a_2] = [BA,B] = B [A,B] = B = a_2;$$
$$[a_1,a_3] = [BA,B^2] = B [A,B^2] = B ([A,B]B+B[A,B])= B \ (2B)= 2B^2=2a_3.$$
These relations show that $Z(\mathfrak{l})=0$, so $\mathfrak{l}$ is nonnilpotent. On the other hand, 
it is metabelian, because
$$[\mathfrak{l},\mathfrak{l}]=\langle [a_1,a_2], [a_2,a_1] , [a_1,a_3], [a_3,a_1] ,[a_2,a_3], [a_3,a_2] \rangle= \langle a_2, a_3\rangle = \langle a_2 \rangle \oplus \langle a_3 \rangle.$$
is two dimensional abelian and one can check that $\mathfrak{l}/[\mathfrak{l},\mathfrak{l}]   \simeq \langle a_1 \rangle$
is one dimensional in such a way that $\mathfrak{l}$ is semidirect sum of $[\mathfrak{l},\mathfrak{l}]$ by $ \langle a_1 \rangle $.
\end{proof}

\medskip
\medskip

A concrete situation can be presented, in order to illustrate Theorem \ref{main3}.

\medskip
\medskip

\begin{ex} The operators $a_1$, $a_2$ and $a_3$ of Theorem \ref{main3} can be used to construct a non-selfadjoint operator $H$,  which defines the dynamics of a one-mode quantum system, see \cite{baggarg} for instance. The Hamiltonian is
	$$
	H=\omega a_1+\lambda a_2+\mu a_3=\omega BA+\lambda B+\mu B^2,
	$$
	where $\omega, \lambda, \mu$ can be real or complex. This operator has a kinetic term $\omega a_1$ and two interaction contributions describing the creation of a single particle, $\lambda B$, or of a couple of particles simultaneously, $\mu B^2$.
\end{ex}

\medskip
\medskip

Due to the fact that metabelian finite dimensional Lie algebras can be constructed in an appropriate way via extensions of metabelian Lie algebras, we believe that, if there are examples of finite dimensional  Lie algebras, not realizable by pseudobosons, then these must be simple Lie algebras.

\medskip
\medskip

\begin{con}If there exists an example of a finite dimensional Lie algebra $\mathfrak{l}$, which cannot be realized by pseudo-bosonic operators, then $\mathfrak{l}$ must be a semisimple Lie algebra neither nilpotent nor solvable.
\end{con}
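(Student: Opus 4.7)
The plan is to attack the contrapositive: show that every finite dimensional complex Lie algebra $\mathfrak{l}$ which is either nilpotent, or solvable, or not semisimple, admits a realization by pseudo-bosonic operators in some $\mathcal{L}^\dagger(\mathcal{D})$. The nilpotent case is exactly Theorem \ref{main1}, so the substance of a proof would consist of two further steps, each extending the ideas already used for the specific algebras in Theorems \ref{main2} and \ref{main3}.

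First I would handle arbitrary finite dimensional solvable Lie algebras. By a classical structural result (see \cite{snobl,snobl&wint}), if $\mathfrak{l}$ is solvable then $[\mathfrak{l},\mathfrak{l}]\subseteq N(\mathfrak{l})$, so $\mathfrak{l}/N(\mathfrak{l})$ is abelian and $\mathfrak{l}$ decomposes, up to extension data, as a semidirect sum of the nilpotent ideal $N(\mathfrak{l})$ and an abelian complement $\mathfrak{a}$. Theorem \ref{main1} realizes $N(\mathfrak{l})$ by pseudo-bosons, and each generator of $\mathfrak{a}$ acts as a derivation of $N(\mathfrak{l})$. The model to imitate is Theorem \ref{main3}: there $B$ and $B^2$ generate an abelian ideal while the number-like operator $BA$ implements the derivation $[BA,B^k]=kB^k$. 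The plan is therefore to realize the generators of $\mathfrak{a}$ as polynomial combinations of pseudo-bosonic operators (possibly coming from several independent commuting pairs $(A_i,B_i)$, each pair defined on its own factor of a tensor product dense domain $\mathcal{D}_1\otimes\cdots\otimes\mathcal{D}_k$) whose commutator with the operators chosen for $N(\mathfrak{l})$ reproduces the prescribed derivation on the nose. The cocycle machinery of Algorithm \ref{method}, used in Theorem \ref{main1} to glue the abelian piece into the Lie bracket via \eqref{liebracket} and \eqref{construction}, would be adapted to the noncentral (semidirect) setting: the key observation is that the extra term added to the bracket is again polynomial in the pseudo-bosonic generators, hence stays inside $\mathcal{L}^\dagger(\mathcal{D})$.

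Second, for non-semisimple non-solvable $\mathfrak{l}$ I would invoke the Levi decomposition to write $\mathfrak{l}=S(\mathfrak{l})\rtimes\mathfrak{b}$ with $\mathfrak{b}$ semisimple. The solvable factor $S(\mathfrak{l})$ is handled by the previous step, and $\mathfrak{b}$ is a direct sum of simple ideals, each of which one would try to realize by an iteration of the quadratic metaplectic-type construction of Theorem \ref{main2}, where $L_0=\tfrac12(BA+\tfrac12\mathbb{I})$, $L_\pm$ are built from $A^2,B^2$ and reproduce $\mathfrak{so}(3)\simeq\mathfrak{sl}(2)$. Using several independent pseudo-bosonic pairs and analogous quadratic combinations one expects to reach all classical simple Lie algebras. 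Once $\mathfrak{b}$ and $S(\mathfrak{l})$ are realized inside a common $\mathcal{L}^\dagger(\mathcal{D})$ on disjoint boson pairs, the final step is to twist the two realizations so that the mixed commutators encode the Levi action $\mathfrak{b}\to\mathrm{Der}(S(\mathfrak{l}))$.

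The main obstacle, and the reason the statement is only a conjecture, is precisely this last compatibility: the Levi action is intrinsic and may be quite intricate, whereas pseudo-bosonic polynomial operators have a very rigid commutation calculus driven only by $[A,B]=\mathbb{I}$. Even granting that each simple summand of $\mathfrak{b}$ can be realized quadratically in its own bosonic pairs, one must build, inside a single $O^*$-algebra, operators whose brackets match a prescribed derivation of the solvable radical. I would not expect a uniform polynomial recipe to exist; rather, the strategy is a case analysis guided by the structure of the nilradical, following the philosophy of \cite{snobl,snobl&wint}. Everything short of this compatibility check can be carried out by the same combination of Algorithm \ref{method}, Theorem \ref{main1}, and the explicit quadratic realizations in Theorems \ref{main2} and \ref{main3}; the genuine gap is whether every Levi action admits a pseudo-bosonic incarnation, and settling this is what would turn the conjecture into a theorem.
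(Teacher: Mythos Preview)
The statement you are addressing is presented in the paper as a \emph{conjecture}, not a theorem; the paper gives no proof, only the one-line heuristic remark immediately preceding it. There is therefore no argument on the paper's side against which to compare your proposal.

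Your outline is a sensible plan of attack and you correctly flag the genuine obstruction, namely implementing an arbitrary Levi action inside some $\mathcal{L}^\dagger(\mathcal{D})$. One point is worth sharpening, though. Your second step already presupposes that an arbitrary semisimple $\mathfrak{b}$ can be realized pseudo-bosonically. But observe that if some semisimple $\mathfrak{b}$ were \emph{not} realizable, then $\mathfrak{b}\oplus\mathfrak{i}$ would not be realizable either, since any realization of the direct sum restricts to one of $\mathfrak{b}$; yet $\mathfrak{b}\oplus\mathfrak{i}$ is not semisimple. Thus the conjecture, taken literally, is equivalent to the assertion that \emph{every} finite dimensional complex Lie algebra can be realized by pseudo-bosons. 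Consequently your Levi-decomposition strategy does not genuinely reduce the difficulty: to handle the non-semisimple $\mathfrak{l}=S(\mathfrak{l})\rtimes\mathfrak{b}$ you must in particular realize the semisimple factor $\mathfrak{b}$, which is exactly the case the conjecture singles out as potentially obstructed. Your proposal is a correct unpacking of the logical structure of the statement, but the residual gap you isolate is the entire content of the conjecture, and the paper leaves it open for that reason.
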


\end{document}